\documentclass[sigconf]{aamas}

\usepackage{balance} 
\usepackage{algorithm}
\usepackage{algorithmic}
\usepackage{amsmath}

\doi{}

\makeatletter
\gdef\@copyrightpermission{
  \begin{minipage}{0.2\columnwidth}
   \href{https://creativecommons.org/licenses/by/4.0/}{\includegraphics[width=0.90\textwidth]{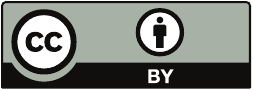}}
  \end{minipage}\hfill
  \begin{minipage}{0.8\columnwidth}
   \href{https://creativecommons.org/licenses/by/4.0/}{This work is licensed under a Creative Commons Attribution International 4.0 License.}
  \end{minipage}
  \vspace{5pt}
}
\makeatother

\setcopyright{ifaamas}
\acmConference[AAMAS '26]{Proc.\@ of the 25th International Conference
on Autonomous Agents and Multiagent Systems (AAMAS 2026)}{May 25 -- 29, 2026}
{Paphos, Cyprus}{C.~Amato, L.~Dennis, V.~Mascardi, J.~Thangarajah (eds.)}
\copyrightyear{2026}
\acmYear{2026}
\acmDOI{}
\acmPrice{}
\acmISBN{}

\title[Sample-Efficient PSRO with JBR]{Sample-Efficient Policy Space Response Oracles with Joint Experience Best Response}

\author{Ariyan Bighashdel}
\affiliation{
  \institution{Utrecht University}
  \city{Utrecht}
  \country{The Netherlands}}
\affiliation{
  \institution{Delft University of Technology}
  \city{Delft}
  \country{The Netherlands}}
\email{a.bighashdel@uu.nl}

\author{Thiago D. Simão}
\affiliation{
  \institution{Eindhoven University of Technology}
  \city{Eindhoven}
  \country{The Netherlands}}
\email{t.simao@tue.nl}

\author{Frans A. Oliehoek}
\affiliation{
  \institution{Delft University of Technology}
  \city{Delft}
  \country{The Netherlands}}
\email{f.a.oliehoek@tudelft.nl}

\begin{abstract}
Multi-agent reinforcement learning (MARL) offers a scalable alternative to exact game-theoretic analysis but suffers from non-stationarity and the need to maintain diverse populations of strategies that capture non-transitive interactions. Policy Space Response Oracles (PSRO) address these issues by iteratively expanding a restricted game with approximate best responses (BRs), yet per-agent BR training makes it prohibitively expensive in many-agent or simulator-expensive settings. We introduce \emph{Joint Experience Best Response (JBR)}, a drop-in modification to PSRO that collects trajectories once under the current meta-strategy profile and reuses this joint dataset to compute BRs for all agents simultaneously. This amortizes environment interaction and improves the \emph{sample efficiency of best-response computation}. Because JBR converts BR computation into an offline RL problem, we propose three remedies for distribution-shift bias: (i) \emph{Conservative JBR} with safe policy improvement, (ii) \emph{Exploration-Augmented JBR} that perturbs data collection and admits theoretical guarantees, and (iii) \emph{Hybrid BR} that interleaves JBR with periodic independent BR updates. Across benchmark multi-agent environments, Exploration-Augmented JBR achieves the best accuracy–efficiency trade-off, while Hybrid BR attains near-PSRO performance at a fraction of the sample cost. Overall, JBR makes PSRO substantially more practical for large-scale strategic learning while preserving equilibrium robustness.
\end{abstract}

\keywords{Multi-Agent Reinforcement Learning, Policy Space Response Oracles, Game Theory}

\newcommand{\BibTeX}{\rm B\kern-.05em{\sc i\kern-.025em b}\kern-.08em\TeX}
\newcommand{\bri}{\mathrm{BR}_i}

\begin{document}


\pagestyle{fancy}
\fancyhead{}


\maketitle 

\section{Introduction}

\begin{figure}[t]
    \centering
    \includegraphics[width=0.9\linewidth]{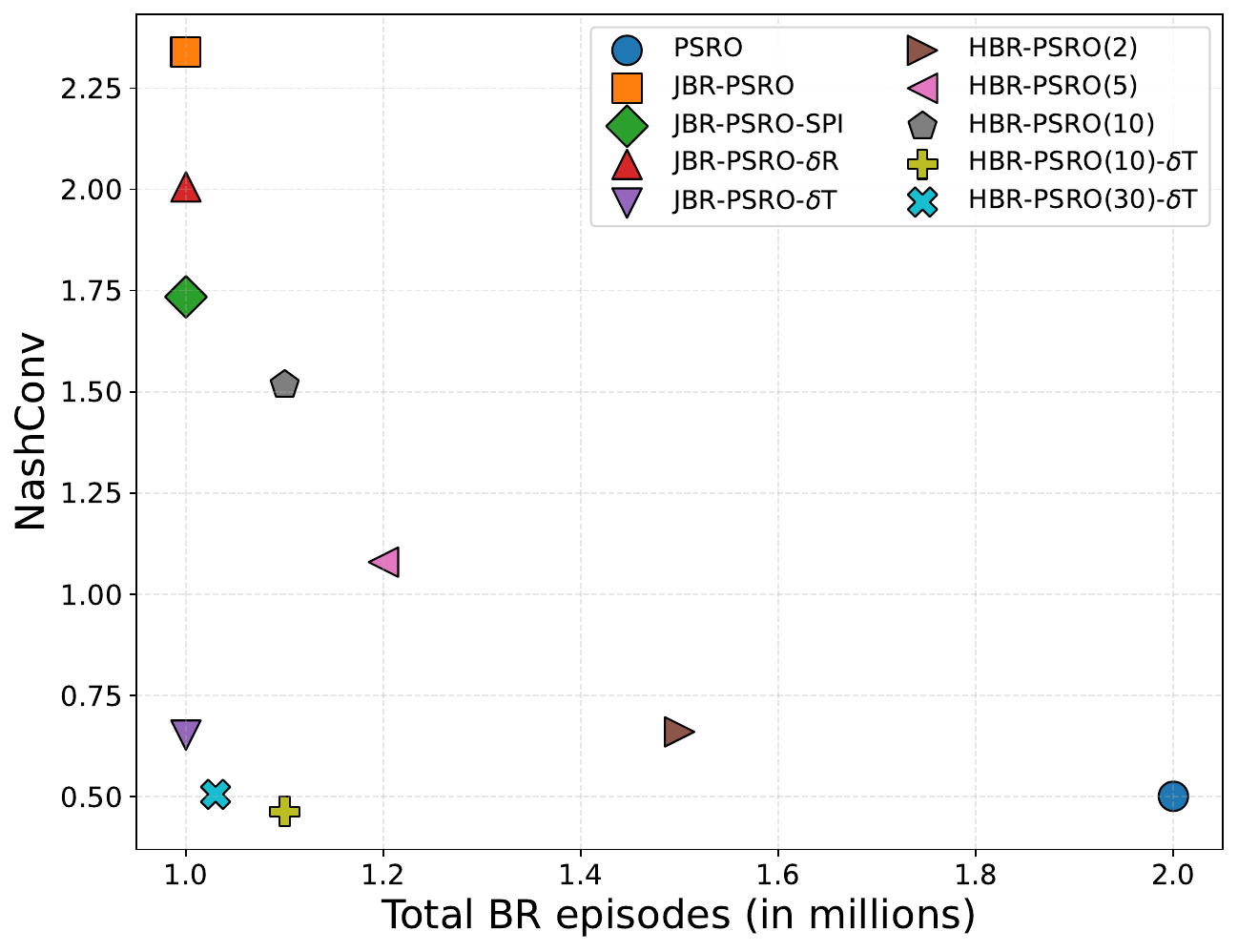}
    \caption{
    \textbf{Sample-efficiency -- accuracy trade-off in Leduc Poker.} 
    Shown are total best-response episodes (in millions, $x$-axis) 
    versus minimum NashConv after 100 iterations ($y$-axis). 
    Standard PSRO is accurate but requires the highest BR sample cost. 
    Joint Experience Best Response (\textsc{JBR-PSRO}) and its enhanced variants—
    conservative (\textsc{JBR-PSRO-SPI}), exploration-augmented 
    (\textsc{JBR-PSRO-$\delta$R}, \textsc{JBR-PSRO-$\delta$T}), and hybrid 
    (\textsc{HBR-PSRO(10/30)-$\delta$T})—drastically reduce the number of 
    BR episodes needed for convergence, with \textsc{JBR-PSRO-$\delta$T} achieving 
    the best trade-off and hybrid versions approaching PSRO-level accuracy. 
    }
    \label{fig:efficiency_tradeoff}
    \Description{
    \textbf{Sample-efficiency -- accuracy trade-off in Leduc Poker.} 
    Shown are total best-response episodes (in millions, $x$-axis) 
    versus minimum NashConv after 100 iterations ($y$-axis). 
    Standard PSRO is accurate but requires the highest BR sample cost. 
    Joint Experience Best Response (\textsc{JBR-PSRO}) and its enhanced variants—
    conservative (\textsc{JBR-PSRO-SPI}), exploration-augmented 
    (\textsc{JBR-PSRO-$\delta$R}, \textsc{JBR-PSRO-$\delta$T}), and hybrid 
    (\textsc{HBR-PSRO(10/30)-$\delta$T})—drastically reduce the number of 
    BR episodes needed for convergence, with \textsc{JBR-PSRO-$\delta$T} achieving 
    the best trade-off and hybrid versions approaching PSRO-level accuracy. 
    }
\end{figure}

Multi-agent systems (MAS) underlie many real-world domains, from autonomous traffic control to electronic markets, where agents must continually adapt to one another’s strategies. Game theory offers strong foundations for analyzing such interactions through equilibrium concepts, but computing equilibria in large or sequential games is infeasible in practice due to the combinatorial growth of policy spaces~\cite{von2002computing,gambit23}. This motivates the need for scalable learning-based approaches.

Multi-agent reinforcement learning (MARL) provides one such alternative, enabling agents to improve their policies through repeated interaction~\cite{shoham2007if,marl-book}. However, MARL faces two key challenges. First, independent learning induces non-stationarity: as each agent updates, the environment faced by others keeps changing~\cite{tuyls2012multiagent}. Centralized Training with Decentralized Execution (CTDE)~\cite{kraemer2016multi,lowe2017multi} partly mitigates this issue by allowing agents to learn with access to joint observations or gradients during training while acting independently at test time. However, residual instability often persists. Second, many games are non-transitive—no strategy is globally dominant—and robustness requires maintaining diverse populations of strategies~\cite{czarnecki2020real,sanjaya2022measuring,bighashdel2024off,bighashdel2023coordinating}.

To address these challenges, the \emph{Policy Space Response Oracles (PSRO)} framework~\cite{lanctot2017unified} emerged as a synthesis of ideas from several research communities: from \emph{planning}, it builds on the Double Oracle algorithm~\cite{mcmahan2003planning}; from \emph{co-evolution}, it adopts mechanisms such as Nash memory~\cite{ficici2003game} and Parallel Nash Memory~\cite{oliehoek2006parallel}; from \emph{empirical game-theoretic analysis (EGTA)}, it incorporates simulation-based restricted games~\cite{schvartzman2009exploring}; and from \emph{reinforcement learning}, it leverages RL-based oracles to train best responses in high-dimensional sequential environments. At its core, PSRO maintains a restricted game consisting of a subset of strategies, computes a \emph{meta-strategy} (a probability distribution over the restricted strategies) using a chosen solver, and expands the restricted game by adding new (approximate) best responses, i.e., strategies that maximize an agent’s payoff against the current meta-strategy profile (the mixed distribution of the other agents’ strategies). By repeating this process, PSRO incrementally approximates equilibria while preserving diverse strategy populations that can capture non-transitive dynamics. The overall procedure is summarized in Algorithm~\ref{alg:psro}.

Despite its strengths, PSRO suffers from a major limitation: its computational cost. At each iteration, every agent must independently compute a best response, requiring as many reinforcement learning problems as there are agents, and in each case, the remaining agents must be simulated. In domains with many players or costly simulators, this process quickly becomes impractical. Existing works have sought to improve efficiency through parallelization~\cite{lanctot2017unified,mcaleer2020pipeline}, sample-efficient simulation~\cite{smith2023co,zhou2022efficient}, or transfer learning~\cite{liu2022neupl,liu2024neural,smith2023strategic}. While such methods reduce wall-clock training time, they do not alter the fundamental structure of PSRO: separate best-response computation for each agent at every iteration.  
 
We propose \emph{Joint Experience Best Response (JBR)}, a drop-in modification to PSRO that reuses joint experience across agents. Instead of training each best response independently, JBR collects trajectories once under the current meta-strategy profile and computes best responses for all agents from this shared dataset. This dramatically improves the \emph{sample efficiency of best-response computation}, making PSRO practical in multi-agent or simulator-expensive domains.

Because JBR reuses data generated under the current meta-strategy rather than interacting with the environment online, best-response computation becomes an \emph{offline reinforcement learning (offline RL)} problem, where policies must be optimized from a fixed dataset instead of new environment rollouts. This offline setting can introduce bias due to distribution shift. We introduce three complementary variants:
\emph{(i) Conservative JBR}, which enforces Safe Policy Improvement~\cite{laroche2019safe} to guarantee no degradation below the baseline meta-strategy;
\emph{(ii) Exploration-Augmented JBR}, which perturbs data collection to improve coverage while preserving theoretical guarantees; and
\emph{(iii) Hybrid BR}, which alternates between JBR and periodic independent BR to combine efficiency with accuracy (Figure~\ref{fig:efficiency_tradeoff}).

Our findings show that exploration-augmented JBR with targeted perturbation (\textbf{JBR-PSRO-$\delta$T}) achieves near-PSRO equilibrium accuracy at a fraction of the sample cost, while hybrid variants (\textbf{HBR-PSRO($k$)-$\delta$T}) reach PSRO-level performance with only minor additional training. Across discrete and continuous environments, PSRO-style methods—including our JBR variants—consistently outperform standard MARL paradigms such as independent learning and CTDE, confirming that strategically grounded training remains more robust when made sample-efficient.

Our contributions are threefold:
\vspace{-2pt}
\begin{enumerate}
    \item We introduce \emph{Joint Experience Best Response (JBR)}, an efficient modification of PSRO that \textbf{improves the sample efficiency of best-response computation} by reusing shared trajectories across agents.  
    \item We identify and address the offline RL limitations of JBR, proposing and evaluating three remedies: Conservative JBR, Exploration-Augmented JBR (with theoretical guarantees), and Hybrid BR.  
    \item We provide a comprehensive empirical study comparing JBR variants not only against Independent BR (standard PSRO), but also against mainstream MARL baselines such as Independent Learning and CTDE, showing that PSRO-based approaches consistently yield superior strategic robustness.  
\end{enumerate}
\vspace{-4pt}

\section{Related Work} 
A wide range of extensions to PSRO have been proposed to improve different aspects of the framework, including scalability, exploration, and equilibrium computation. 
These directions are reviewed in a recent survey~\cite{bighashdel2024policy}. 
In this paper, we focus specifically on methods aimed at improving the \emph{efficiency} of PSRO.

\emph{Parallelization.}  
Early work explored reducing wall-clock time through parallel training of best responses. 
The Deep Cognitive Hierarchy (DCH)~\cite{lanctot2017unified} introduced a hierarchical schedule in which each player trains against equilibria of lower levels, effectively reusing earlier computations. 
Pipeline PSRO (P2SRO)~\cite{mcaleer2020pipeline} refined this idea by coordinating multiple strategies trained in parallel and dynamically optimizing the hierarchy itself. While these methods improve throughput, they require substantial computational resources and do not reduce the overall number of best responses.

\emph{Sample efficiency.}  
Another line of research targets the reliance on costly simulation. 
Smith and Wellman~\cite{smith2023co} proposed to jointly learn a generative model of the game while running PSRO, enabling queries to the learned model instead of the environment. 
Zhou~et~al.~\cite{zhou2022efficient} developed EPSRO, which avoids redundant restricted-game evaluations by exploiting cases where best-response target profiles can be derived without full simulation. Although these approaches cut down on simulator calls, their benefits depend heavily on the quality of the learned model, and the cost of retraining best responses remains unchanged.

\emph{Transfer learning.}  
A third approach is to accelerate adaptation by reusing knowledge across best responses. 
NeuPL~\cite{liu2022neupl} introduced parameter sharing across strategies in a single network, while Liu~et~al.~\cite{liu2022simplex} trained against sampled mixtures to provide robustness to diverse opponent profiles. 
Later, Liu~et~al.~\cite{liu2024neural} combined NeuPL with JPSRO to extend transfer benefits to correlated-equilibrium computation. 
Smith~et~al.~\cite{smith2023strategic} proposed a Mixed-Oracle method that constructs new best responses by mixing value functions pre-trained against pure strategies. These methods accelerate learning through knowledge reuse, but their effectiveness is tied to how well information transfers across opponents, and they still instantiate new best responses one at a time.

\emph{Summary.}  
Prior work demonstrates that PSRO can be made more practical through parallelization, efficient simulation, and transfer across best-response computations. 
While such methods reduce wall-clock training time, they largely remain implementation-level optimizations: the core PSRO loop continues to generate a separate best response for each agent at every iteration. 
In contrast, our proposed JBR framework changes this structure directly by reusing shared interaction experience to compute best responses for all agents simultaneously, offering a different route toward scalability.

\section{Background}
The core idea of PSRO is to approximate otherwise intractable games by iteratively constructing a smaller \emph{restricted game}, 
in which each player has access only to a subset of strategies. 
This restricted game is gradually expanded by adding new best responses, which are computed using RL. 
In this way, PSRO maintains a tractable approximation of the full game while progressively refining the strategic coverage. Formally, consider an $n$-player Markov (stochastic) game defined by:
\begin{itemize}
\item a state space $\mathcal{S}$,
\item action spaces $\mathcal{A}_i$ for each player $i \in N=\{1,\dots,n\}$, with joint action space $\mathcal{A} = \times_{i=1}^n \mathcal{A}_i$,
\item a transition kernel $P(s' \mid s,a_1,\dots,a_n)$,
\item player-specific reward functions $r_i(s,a_1,\dots,a_n)$,
\item a discount factor $\gamma \in [0,1)$,
\item and an initial state distribution $\rho_0$.
\end{itemize}

A strategy for player $i$ is a policy $\pi_i \in \Pi_i$, i.e., a mapping from states to distributions over $\mathcal{A}_i$. 
Given a policy profile $\pi = (\pi_1,\dots,\pi_n)$, the utility of player $i$ is the expected discounted return of per-step rewards:
\[
u_i(\pi) = \mathbb{E}\!\left[\sum_{t=0}^\infty \gamma^t\, r_i(s_t,a_{1,t},\dots,a_{n,t}) \;\middle|\; \pi\right].
\]

PSRO can thus be understood as operating on a \emph{normal-form game induced by the Markov game}, $\mathcal{G} = (N, (\Pi_i), (u_i))$, where the pure strategies are full policies and payoffs are their expected returns. 
In practice, this induced game is intractable because each $\Pi_i$ is enormous. 
PSRO therefore maintains restricted policy sets $X_i \subseteq \Pi_i$, consisting of the policies generated so far, 
and defines the restricted empirical game, $\hat{\mathcal{G}} = (N, (X_i), (\hat{u}_i))$, where $\hat{u}_i$ are payoff estimates obtained from rollouts of profiles $\pi \in X = \times_i X_i$. 
The restricted empirical game serves as the basis for strategy updates, 
representing a smaller, simulation-based approximation of the full game that captures interactions among the currently known policies. The PSRO algorithm alternates between two key components:
\begin{itemize}
\item \textbf{Meta-Strategy Solver (MSS):} Given $\hat{\mathcal{G}}$, the MSS computes a meta-strategy profile (mixed strategy) $\sigma = (\sigma_i)_{i \in N}$, where each $\sigma_i$ is a probability distribution over $X_i$ that specifies how frequently each policy in the agent’s current set is played. Typical choices include Nash equilibrium solvers or regret-minimization dynamics.

\item \textbf{Response Oracle (RO):} For each player $i$, the RO computes an approximate best-response policy $\pi_i^{\mathrm{BR}}$ against $\sigma_{-i}$. Formally, the best-response operator is
\[
    br_i(\sigma_{-i}) = \arg\max_{\pi_i \in \Pi_i} u_i(\pi_i, \sigma_{-i}).
\]
In practice, the RO is \emph{implemented by deep RL}: the agent is trained in the induced MDP where opponents follow $\sigma_{-i}$, yielding a policy $\pi_i^{\mathrm{BR}}$ that approximates $br_i(\sigma_{-i})$. The new policy is then added to $X_i$, expanding the restricted game.
\end{itemize}

\begin{algorithm}[t]
\caption{Policy Space Response Oracles (PSRO)}
\label{alg:psro}
\begin{algorithmic}[1]
\REQUIRE Initial restricted policy sets $(X_i)$
\STATE Estimate $\hat{\mathcal{G}} = (N,(X_i),(\hat{u}_i))$
\STATE Initialize $\sigma \leftarrow \text{MSS}(\hat{\mathcal{G}})$
\WHILE{not terminated}
    \FOR{player $i \in N$}
        \STATE Compute $\pi_i^{\mathrm{BR}} \leftarrow \text{RO}(\sigma_{-i})$
        \STATE Update $X_i \leftarrow X_i \cup \{\pi_i^{\mathrm{BR}}\}$
    \ENDFOR
    \STATE Re-estimate $\hat{\mathcal{G}}$
    \STATE Update $\sigma \leftarrow \text{MSS}(\hat{\mathcal{G}})$
\ENDWHILE
\RETURN $\sigma$
\end{algorithmic}
\end{algorithm}

The complete procedure is shown in Algorithm~\ref{alg:psro}. The process continues until a stopping criterion is met, often based on \emph{regret}. 
The regret of player $i$ under a meta-strategy profile $\sigma$ is
\begin{equation}
    \rho^{\mathcal{G}}_i(\sigma) = \max_{\pi_i \in \Pi_i} u_i(\pi_i, \sigma_{-i}) - u_i(\sigma_i, \sigma_{-i}),
\end{equation}
which measures the potential gain from deviating unilaterally. 
At a Nash equilibrium, all regrets vanish. 
The aggregate regret,
\begin{equation}
    \rho^{\mathcal{G}}(\sigma) = \sum_{i \in N} \rho^{\mathcal{G}}_i(\sigma),
\end{equation}
is known as \emph{NashConv}~\cite{lanctot2017unified} and, in two-player games, corresponds to exploitability. 
This quantity is widely used to track convergence in PSRO experiments. 
In practice, it can only be computed exactly in small discrete games where exhaustive best-response search is feasible; in larger or continuous environments, it must be approximated using learned best responses (see Section~\ref{sec:experiments}).

\section{Method}
\label{sec:method}

Throughout this section we adopt a unified perspective: best responses are derived via \emph{model-based reinforcement learning}, specifically value iteration. 
This choice is made purely for clarity in explaining the different approaches. 
In practice, the RO is typically realized by deep RL methods, which approximate the same objective when exact model-based computation is infeasible. 
All formulations are therefore presented in this model-based framework for consistency, while our experiments employ both model-based and deep RL implementations depending on the environment.

\subsection{Induced MDP}
\label{sec:induced-mdp}

In PSRO, the best response for player $i$ is obtained by \emph{reducing} the multi-agent stochastic game to a single-agent MDP: we fix the opponents’ meta-strategy profile, absorb their behavior into the environment, and solve the resulting induced MDP.

PSRO assumes that the game has \emph{perfect recall}, meaning that each agent remembers the entire sequence of states and actions it has observed and taken so far. 
This assumption ensures that no relevant information is forgotten when making decisions over time. 
Under perfect recall, Kuhn’s theorem~\cite{kuhn1953extensive} states that any mixed strategy (a distribution over full policies) is realization-equivalent to a behavior strategy that specifies, at each state, a distribution over available actions. 
In other words, while playing against a mixed strategy formally induces a POMDP—since the opponent’s policy index is unobserved—the assumption of perfect recall implies that the state $s$ already encapsulates the relevant history $h$. 
Hence, the two representations are equivalent, and without loss of generality the opponents’ meta-strategy $\sigma_{-i}$ can be represented as a behavior strategy $\sigma_{-i}(a_{-i}\mid s)$.\footnote{%
\emph{Remark:} If opponents’ meta-strategies are interpreted as distributions over full policies and one such policy is sampled at the start of an episode, then from the viewpoint of player $i$ the opponent policy index acts as a hidden variable, inducing a POMDP. Two standard resolutions exist: (i) represent the state space $\mathcal{S}$ as the set of complete histories, making the process Markov, or (ii) assume \emph{perfect recall}, which by Kuhn’s theorem guarantees that every mixed strategy is realization-equivalent to a behavior strategy. In both cases, we may treat the meta-strategy $\sigma_{-i}$ as a behavior strategy $\sigma_{-i}(a_{-i}\mid s)$, yielding the induced MDP.}

Formally, fixing $\sigma_{-i}$ yields effective transition and reward functions for player $i$ by marginalizing over the opponents’ actions:
\begin{align}
P^{\sigma_{-i}}(s'\mid s,a_i)
&= \sum_{a_{-i}\in\mathcal{A}_{-i}} \sigma_{-i}(a_{-i}\mid s)\; P(s'\mid s,a_i,a_{-i}), \label{eq:induced-dyn}\\
r_i^{\sigma_{-i}}(s,a_i)
&= \sum_{a_{-i}\in\mathcal{A}_{-i}} \sigma_{-i}(a_{-i}\mid s)\; r_i(s,a_i,a_{-i}). \label{eq:induced-rew}
\end{align}
(Replace sums by integrals for continuous $\mathcal{A}_{-i}$.) This defines the induced single-agent MDP for player $i$:
\[
\mathcal{M}_i^{\sigma_{-i}} \;=\; \big(\mathcal{S},\,\mathcal{A}_i,\,P^{\sigma_{-i}},\,r_i^{\sigma_{-i}},\,\gamma,\,\rho_0\big).
\]

\subsection{Independent Best Response}
\label{sec:ind-br}

Let $Q_i(s,a_i)$ denote the action-value function of player $i$ in the induced MDP $\mathcal{M}_i^{\sigma_{-i}}$. 
Value iteration computes $Q_i$ by repeatedly applying the Bellman optimality update:
\[
Q_i^{k+1}(s,a_i)
\;=\;
r_i^{\sigma_{-i}}(s,a_i)
\;+\;
\gamma \sum_{s'\in\mathcal{S}} P^{\sigma_{-i}}(s'\mid s,a_i)\,
\max_{a_i'\in\mathcal{A}_i} Q_i^{k}(s',a_i').
\]
As $k \to \infty$, the sequence converges to the optimal action-value function $Q_i^\star$.  
A greedy policy with respect to $Q_i^\star$,
\[
\pi_i^{\mathrm{BR}}(s) \;\in\; \arg\max_{a_i\in\mathcal{A}_i} Q_i^\star(s,a_i),
\]
is then a best response to $\sigma_{-i}$.

\subsection{Joint Experience Best Response}
\label{sec:jbr}

While Independent Best Response (\S\ref{sec:ind-br}) is the standard approach in PSRO, it has a key drawback: each agent’s best response is trained separately. 
This creates substantial redundancy: when training one agent, the interaction experience generated cannot be reused by others, leading to repeated and costly environment simulations. 
In domains with many agents or expensive dynamics, this per-agent training quickly becomes impractical.

To address this inefficiency, we propose \emph{Joint Experience Best Response (JBR)}. 
The core idea is to collect trajectories once, under the current meta-strategy profile $\sigma = (\sigma_j)_{j\in N}$, and reuse this joint dataset to compute best responses for all agents simultaneously. 
In this setting, all players act according to their meta-strategies during data collection, so the dataset $\mathcal{D}^\sigma$ consists of trajectories 
\[
\mathcal{D}^\sigma = \{ (s_t, a_{1,t},\dots,a_{n,t}, r_{1,t},\dots,r_{n,t}, s_{t+1}) \}_{t=0}^T
\]
generated by the Markov game when the joint policy profile $\sigma$ is executed. The overall procedure is summarized in Algorithm~\ref{alg:jbrpsro}.

\begin{algorithm}
\caption{PSRO with Joint Experience Best Response }
\label{alg:jbrpsro}
\begin{algorithmic}[1]
\REQUIRE Initial restricted policy sets $(X_i)$
\STATE Estimate $\hat{\mathcal{G}} = (N,(X_i),(\hat{u}_i))$
\STATE Initialize $\sigma \leftarrow \text{MSS}(\hat{\mathcal{G}})$
\WHILE{not terminated}
    \STATE Collect joint dataset $\mathcal{D}^{\sigma}$ under meta-strategy $\sigma$
    \FOR{player $i \in N$}
        \STATE Compute $\pi_i^{\mathrm{BR}} \leftarrow \text{RO}(\mathcal{D}_{\sigma_{-i}})$
        \STATE Update $X_i \leftarrow X_i \cup \{\pi_i^{\mathrm{BR}}\}$
    \ENDFOR
    \STATE Re-estimate $\hat{\mathcal{G}}$
    \STATE Update $\sigma \leftarrow \text{MSS}(\hat{\mathcal{G}})$
\ENDWHILE
\RETURN $\sigma$
\end{algorithmic}
\end{algorithm}

\subsection{Na\"ive JBR}
\label{sec:jbr-naive}

The most direct approach is to treat $\mathcal{D}^\sigma$ as a fixed offline dataset and then, for each agent $i$, construct its induced MDP $\mathcal{M}_i^{\sigma_{-i}}$ as in \S\ref{sec:induced-mdp}. 
Best responses are then obtained by applying model-based value iteration on $\mathcal{M}_i^{\sigma_{-i}}$, using $\mathcal{D}^\sigma$ to estimate the transition dynamics $P^{\sigma_{-i}}$ and rewards $r_i^{\sigma_{-i}}$. 
Formally, the $Q$-iteration update is
\[
Q_i^{k+1}(s,a_i)
\;=\;
\hat{r}_i^{\sigma_{-i}}(s,a_i)
\;+\;
\gamma \sum_{s'\in\mathcal{S}} \hat{P}^{\sigma_{-i}}(s'\mid s,a_i)\,
\max_{a_i'} Q_i^k(s',a_i'),
\]
where $\hat{r}_i^{\sigma_{-i}}$ and $\hat{P}^{\sigma_{-i}}$ are the reward and transition models estimated from $\mathcal{D}^\sigma$, respectively. 
The resulting greedy policy $\pi_i^{\mathrm{BR}}$ is then added to the restricted game, exactly as in standard PSRO.

Although conceptually simple, this \emph{Naïve JBR} suffers from a critical limitation: since the dataset $\mathcal{D}^\sigma$ is generated under the current meta-strategy, it may not cover the state–action regions relevant for computing the true best response. 
This distribution shift makes the problem an instance of offline RL, potentially leading to suboptimal approximations of best responses. 
To address these limitations, we introduce three complementary remedies. 
The first is a conservative safeguard based on Safe Policy Improvement (\S\ref{sec:jbr-safe}), which ensures that policies do not underperform the current meta-strategy. 
The second is an exploration-based approach (\S\ref{sec:jbr-exploration}), which perturbs the meta-strategy during data collection to improve coverage and admits theoretical guarantees. 
Finally, a hybrid variant (\S\ref{sec:hybrid-br}) alternates between joint and independent best responses to balance efficiency and accuracy. 
Together, these three approaches mitigate the shortcomings of Naïve JBR.

\subsection{Conservative JBR}
\label{sec:jbr-safe}

To address the offline RL bias of Naïve JBR, we adopt the principle of \emph{Safe Policy Improvement} (SPI)~\cite{laroche2019safe}. 
The idea is to restrict the learned policy to only deviate from its baseline $\sigma_i$ in regions of the state--action space that are sufficiently well supported by $\mathcal{D}^\sigma$. 
Wherever the dataset lacks coverage, the policy is forced to fall back to $\sigma_i$, guaranteeing that performance cannot degrade below the baseline.

Formally, let $N_{\mathcal{D}}(s,a_i)$ denote the number of times $(s,a_i)$ appears in $\mathcal{D}^\sigma$, and fix a threshold $N_\wedge$. 
We define the set of \emph{uncertain} state--action pairs as
\[
B_i \;=\; \{(s,a_i) \;\mid\; N_{\mathcal{D}}(s,a_i) < N_\wedge \}.
\]
The SPI constraint then requires that for all $(s,a_i)\in B_i$,
\[
\pi_i(a_i \mid s) = \sigma_i(a_i \mid s).
\]
That is, in poorly covered regions, the new policy exactly reproduces the baseline meta-strategy $\sigma_i$. 
Outside $B_i$, the policy is optimized via offline value iteration on $\mathcal{M}_i^{\sigma_{-i}}$, using $\mathcal{D}^\sigma$ to estimate transitions and rewards. 

The resulting policy $\pi_i^{\mathrm{SPI}}$ is not necessarily a best response to $\sigma_{-i}$, 
but it is guaranteed to be no worse than $\sigma_i$ (up to estimation error), 
while strictly improving wherever $\mathcal{D}^\sigma$ provides sufficient coverage. 
Thus, $\pi_i^{\mathrm{SPI}}$ can safely be added to the restricted policy set in PSRO. 
However, as we show in our experiments, this conservative approach, while outperforming Naïve JBR, 
still performs poorly compared to the original PSRO with independent best responses. 
This motivates the need for a more effective remedy.

\subsection{Exploration-Augmented JBR}
\label{sec:jbr-exploration}

While Conservative JBR prevents performance degradation, it remains overly cautious: 
improvements are confined to regions already well represented in the dataset, 
limiting its ability to match the performance of independent best responses. 
To overcome this, we propose to deliberately broaden coverage during data collection 
by perturbing the meta-strategy with exploration.

Let $\sigma = (\sigma_j)_{j\in N}$ be the current meta-strategy profile. 
During data collection we use an \emph{$\delta$-perturbed meta-strategy}  that keeps agents mostly on-policy w.r.t.\ $\sigma$ but injects exploration with probability $\delta$.
For each agent $j$, define a per-state mixed behavior
\begin{equation}
\tilde\sigma_j^{(\delta,\nu_j)}(a_j \mid s)
\;\;=\;\;
(1-\delta)\,\sigma_j(a_j \mid s)\;+\;\delta\,\nu_j(a_j \mid s),
\label{eq:eps-meta-single}
\end{equation}
where $\nu_j$ is an exploration policy. 
Agents act independently according to $\tilde\sigma^{(\delta,\nu)} = \big(\tilde\sigma_j^{(\delta,\nu_j)}\big)_{j\in N}$, generating a joint dataset $\mathcal{D}^{\tilde\sigma}$.

We consider two concrete choices, yielding two exploration strategies:

\subsubsection{$\delta$-random exploration.}
\label{sec:jbr-exploration-random}
Set $\nu_j^{\mathrm{rand}}(a_j\mid s)= \mathrm{Unif}(\mathcal{A}_j)$, i.e.,
\[
\tilde\sigma_j^{(\delta,\mathrm{rand})}(a_j\mid s)
=
(1-\delta)\,\sigma_j(a_j\mid s)\;+\;\delta\,\tfrac{1}{|\mathcal{A}_j|}.
\]
This injects state-agnostic coverage while keeping the process mostly on-policy w.r.t.\ $\sigma$.
\subsubsection{$\delta$-targeted exploration.}
\label{sec:jbr-exploration-targeted}
Let $\pi_{j}^{\mathrm{BR,cur}}$ denote the current BR candidate for agent $j$ (the policy being trained in JBR this iteration). 
Set $\nu_j^{\mathrm{tgt}}(a_j\mid s)=\pi_{j}^{\mathrm{BR,cur}}(a_j\mid s)$, i.e.,
\[
\tilde\sigma_j^{(\delta,\mathrm{tgt})}(a_j\mid s)
=
(1-\delta)\,\sigma_j(a_j\mid s)\;+\;\delta\,\pi_{j}^{\mathrm{BR,cur}}(a_j\mid s).
\]
This concentrates additional samples near currently promising BR behaviors, improving value estimation where it matters for BR computation.

\subsubsection{Theoretical Implication}
\label{sec:guarantees}

While exploration broadens coverage and mitigates offline RL bias, it also changes the learning target: 
agents no longer best respond to the exact meta-strategy $\sigma$, but rather to its $\delta$-perturbed variant $\tilde\sigma$. 
This raises the question of whether PSRO with exploration still preserves its theoretical guarantees. 

We now analyze this effect. In particular, we show that even though exploration changes the response oracle’s target distribution, 
PSRO with exploration-augmented JBR still preserves its convergence guarantees up to a bounded error. 
We focus on finite two-player zero-sum Markov games with bounded payoffs in $[\underline{u}, \overline{u}]$ 
and range $R=\overline{u}-\underline{u}$.

\begin{theorem}[Exploration-augmented JBR]
\label{thm:exploration-guarantee}
Let $\sigma$ be the current meta-strategy profile and $\tilde\sigma$ its $\delta$-perturbed variant used for data collection. If each agent computes an $\varepsilon$-best response to $\tilde\sigma$, then upon termination the resulting meta-strategy is an $(\varepsilon + 2R\delta)$-Nash equilibrium of the original game.
\end{theorem}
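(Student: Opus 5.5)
Our plan is to combine the standard PSRO termination argument with a perturbation bound that compares payoffs against $\sigma_{-i}$ and against $\tilde\sigma_{-i}$. First we need to say what ``termination'' means. We take the usual double-oracle criterion: the algorithm stops when, for each player $i$, the newly computed response $\pi_i^{\mathrm{BR}}$ gives no improvement in the restricted game. Formally, $u_i(\pi_i^{\mathrm{BR}},\tilde\sigma_{-i}) \le u_i(\sigma_i,\tilde\sigma_{-i})$. This holds, for instance, when $\pi_i^{\mathrm{BR}}$ already lies in the support of $X_i$ and $\sigma$ is a Nash equilibrium of $\hat{\mathcal{G}}$; we state it directly as the stopping condition. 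Combining this with the $\varepsilon$-best-response property, $u_i(\pi_i^{\mathrm{BR}},\tilde\sigma_{-i}) \ge \max_{\pi_i} u_i(\pi_i,\tilde\sigma_{-i}) - \varepsilon$, gives $\max_{\pi_i} u_i(\pi_i,\tilde\sigma_{-i}) - u_i(\sigma_i,\tilde\sigma_{-i}) \le \varepsilon$. So $\sigma_i$ is already $\varepsilon$-optimal against the perturbed opponents.

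The key lemma is a perturbation bound: for every policy $\pi_i$, $|u_i(\pi_i,\tilde\sigma_{-i}) - u_i(\pi_i,\sigma_{-i})| \le R\delta$. In the two-player case, $\tilde\sigma_{-i} = (1-\delta)\sigma_{-i} + \delta\nu_{-i}$. We interpret this mixture at the level of mixed strategies: with probability $1-\delta$ the opponent plays $\sigma_{-i}$, and otherwise it plays $\nu_{-i}$. By Kuhn's theorem under perfect recall (\S\ref{sec:induced-mdp}), this mixed strategy is realization-equivalent to a behavior strategy. With that reading, expected utility is linear in the mixture weights, so $u_i(\pi_i,\tilde\sigma_{-i}) = (1-\delta)u_i(\pi_i,\sigma_{-i}) + \delta u_i(\pi_i,\nu_{-i})$. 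Hence the difference equals $\delta\,(u_i(\pi_i,\nu_{-i}) - u_i(\pi_i,\sigma_{-i}))$, whose absolute value is at most $\delta R$ because payoffs lie in $[\underline u,\overline u]$.

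We then assemble the bound. Let $\pi_i^\star$ maximize $u_i(\cdot,\sigma_{-i})$. Then
\[
u_i(\pi_i^\star,\sigma_{-i}) - u_i(\sigma_i,\sigma_{-i}) \le \big[u_i(\pi_i^\star,\tilde\sigma_{-i}) + R\delta\big] - \big[u_i(\sigma_i,\tilde\sigma_{-i}) - R\delta\big] \le \varepsilon + 2R\delta.
\]
Thus each player's regret $\rho_i^{\mathcal{G}}(\sigma)$ is at most $\varepsilon + 2R\delta$, which is the claimed $(\varepsilon+2R\delta)$-Nash equilibrium in the per-player sense. In the zero-sum setting, exploitability is the sum of the two regrets, so its bound would be twice this; we will state the result per player to match the theorem.

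\textbf{Main obstacle.} The delicate step is the perturbation lemma. Equation~\eqref{eq:eps-meta-single} defines the mixing per state, not per episode. A per-state mixture is not, in general, the same as the episode-level mixture of $\sigma_{-i}$ and $\nu_{-i}$. A naive per-step analysis gives a horizon-dependent bound of order $R\delta/(1-\gamma)$ or worse. We would first try to justify the episode-level reading, treating $\tilde\sigma$ as an element of the mixed strategy simplex and invoking Kuhn's theorem as the paper does. If that fails, we would fall back on a simulation-lemma or coupling argument that bounds the total-variation distance between trajectory distributions. In that case the constant $2R\delta$ has to be understood with $R$ as the range of returns and $\delta$ as an episode-level deviation probability.
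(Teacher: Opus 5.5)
Under the episode-level reading, your argument is correct and is essentially the paper's. Lemma~\ref{lem:stability} combined with $\|\tilde\sigma_{-i}-\sigma_{-i}\|_1\le 2\delta$ is exactly your uniform bound $|u_i(\pi_i,\tilde\sigma_{-i})-u_i(\pi_i,\sigma_{-i})|\le R\delta$, and the paper also pays it twice.

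The one structural difference is the stopping rule. The paper assumes the check is made under $\sigma$, namely $u_i(\sigma_i,\sigma_{-i})\ge u_i(\hat\pi_i,\sigma_{-i})$. It then transfers the near-optimality of $\hat\pi_i$ from $\tilde\sigma_{-i}$ to $\sigma_{-i}$ (Lemma~\ref{lem:abr}). You instead assume a check under $\tilde\sigma$ and transfer the near-optimality of $\sigma_i$.

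Your justification for your check is wrong. ``$\pi_i^{\mathrm{BR}}\in X_i$ and $\sigma$ is a Nash equilibrium of $\hat{\mathcal{G}}$'' certifies no improvement against $\sigma_{-i}$ only. It says nothing about the different mixture $\tilde\sigma_{-i}$, which typically puts weight on some $\nu_{-i}\notin X_{-i}$ (uniform play, or the opponent's not-yet-added BR candidate). The paper's condition is the one PSRO actually delivers, and your lemma handles it equally well:
\begin{align*}
u_i(\sigma_i,\sigma_{-i}) &\ge u_i(\hat\pi_i,\sigma_{-i}) \ge u_i(\hat\pi_i,\tilde\sigma_{-i})-R\delta\\
&\ge \max_{\pi_i}u_i(\pi_i,\tilde\sigma_{-i})-\varepsilon-R\delta\\
&\ge \max_{\pi_i}u_i(\pi_i,\sigma_{-i})-\varepsilon-2R\delta .
\end{align*}

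The obstacle you flag is genuine, and it is exactly the step the paper's proof glosses over. The paper infers $\Delta\le 2\delta$ from the per-state bound $\|\tilde\sigma_j(\cdot\mid s)-\sigma_j(\cdot\mid s)\|_1\le 2\delta$. Lemma~\ref{lem:stability}, however, needs the $\ell_1$ distance between mixtures over whole policies.

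Your first remedy cannot work. The behavior strategy that Kuhn's theorem associates with the episode-level mixture $(1-\delta)\sigma_{-i}+\delta\nu_{-i}$ puts weight $\delta p_\nu(h)/\big((1-\delta)p_\sigma(h)+\delta p_\nu(h)\big)$ on $\nu_{-i}$ at history $h$. Here $p_\pi(h)$ is the probability that $\pi$ takes the opponent's own actions recorded in $h$. This weight is history-dependent, not the constant $\delta$ of \eqref{eq:eps-meta-single}.

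In fact the $2R\delta$ bound is false for per-state mixing. Let player $i$ first choose $A$ or $B$, after which the opponent moves $H$ times. $A$ pays $i$ one if the opponent plays $x$ at every step and zero otherwise; $B$ pays $c$, so $R=1$. Take $\sigma_{-i}$ to be always $x$ and $\nu_{-i}$ always $y$, so the all-$x$ path has probability $(1-\delta)^H$ under $\tilde\sigma_{-i}$. For $c\in((1-\delta)^H,1)$:
\begin{itemize}
\item $\sigma_i=\hat\pi_i=B$ is an exact best response to $\tilde\sigma_{-i}$;
\item the termination check under $\sigma$ passes for both players, since against $B$ every opponent policy earns $-c$;
\item yet player $i$'s regret against $\sigma_{-i}$ is $1-c$, which tends to $1$ as $H$ grows rather than staying at most $2\delta$.
\end{itemize}

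So your fallback is the right resolution. Coupling the two processes until the first exploratory draw gives $|u_i(\pi_i,\tilde\sigma_{-i})-u_i(\pi_i,\sigma_{-i})|\le R\,(1-(1-\delta)^H)$, with $H$ the maximal number of opponent decisions per episode. This yields an $(\varepsilon+2R(1-(1-\delta)^H))$-equilibrium. Alternatively, perturb at the episode level (switch to $\nu_j$ for the whole episode with probability $\delta$); then both your proof and the paper's go through as written.
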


\paragraph{Proof sketch.}
Payoffs are linear in mixed strategies, so perturbing the opponents’ meta-strategy by $\delta$ changes the payoff of any policy by at most $2R\delta$. 
Hence a best response to $\tilde\sigma$ is also an approximate best response to $\sigma$, up to this error. 
Since PSRO’s termination check is performed under $\sigma$, no profitable deviations are missed, giving the stated $(\varepsilon+2R\delta)$ bound. 
A full proof is provided in Appendix~\ref{app:exploration-proof}.

\paragraph{Implication.}
This result shows that exploration-augmented JBR preserves the convergence guarantees of PSRO up to a controlled error: the deviation from equilibrium grows only linearly with $\delta$, while the added exploration improves coverage and reduces offline RL bias. 
It should be noted, however, that the guarantee holds only under the assumption that an $\varepsilon$-best response to $\tilde\sigma$ can be found. 
In practice, computing such a response from a fixed dataset is not guaranteed, as limited coverage may prevent reliable policy improvement. 
This motivates the use of multiple exploration strategies, which aim to enrich the dataset and increase the likelihood that a sufficiently good response can be learned in practice.

\subsection{Hybrid Best Response}
\label{sec:hybrid-br}

While JBR substantially reduces simulation cost, its reliance on offline data introduces approximation errors. 
Variants such as Conservative JBR and Exploration-Augmented JBR mitigate these errors, 
but they may still deviate from the accuracy of Independent BR (IBR), which computes exact best responses at each iteration. 
This raises the question of whether we can combine the efficiency of JBR with the accuracy of IBR. 

We therefore consider a \emph{Hybrid BR} approach that alternates between the two. 
Specifically, JBR (possibly in one of its variants) is used in most iterations, 
while every $k$-th iteration is replaced by a standard Independent BR update. 
Formally, let $t$ denote the iteration counter. 
If $t \bmod k = 0$, each agent $i$ computes a best response $\pi_i^{\mathrm{IBR}}$ via independent training against $\sigma_{-i}$; 
otherwise, responses are obtained using JBR (e.g., Naïve, Conservative, or Exploration-Augmented). 

This hybrid strategy retains the sample efficiency benefits of JBR in the majority of iterations, 
while periodically correcting approximation errors by injecting accurate best responses from IBR. 
In our experiments, we evaluate whether such periodic corrections are sufficient 
to close the performance gap between JBR-based variants and original PSRO.

\section{Experiments}
\label{sec:experiments}

The aim of our experiments is to assess the \emph{feasibility and reliability} of the proposed JBR framework. 
By design, JBR computes best responses for all agents simultaneously from a single shared dataset, which trivially improves sample efficiency by reducing environment interactions. 
The central question, however, is whether JBR can still produce accurate best responses—and thus preserve PSRO’s convergence behavior—when data are reused across agents. 

\subsection{Experimental Setup}

\paragraph{Games.}
We consider two classic two-player zero-sum poker games, \emph{Kuhn Poker}~\cite{kuhn2016simplified} and \emph{Leduc Poker}~\cite{southey2012bayes}, followed by three continuous multi-agent particle environments~\cite{lowe2017multi}: \emph{Simple Tag}, \emph{Simple Adversary}, and \emph{Simple Push}. 
This sequence increases the difficulty of learning reliable best responses: 
Kuhn Poker is small and fully covered by the meta-strategy, while Leduc Poker introduces imperfect information and a larger state space that exposes offline-learning bias. 
The particle environments further stress-test the method under continuous control and partial observability, where joint data reuse may degrade policy quality.

\paragraph{Methods.}
We evaluate the following algorithms:
\begin{itemize}
    \item \textbf{PSRO}: standard PSRO with independent best responses (baseline);
    \item \textbf{JBR-PSRO}: na\"ive joint-experience variant computing all BRs from shared data;
    \item \textbf{JBR-PSRO-SPI}: conservative JBR using Safe Policy Improvement (SPI). 
    The SPI variant requires a coverage threshold $N_{\wedge}$ to determine when to revert to the baseline meta-strategy. 
    To isolate the method’s intrinsic potential rather than its sensitivity to this hyperparameter, we employ an oracle-tuned setting that iterates over $N_{\wedge} \in [0, 50]$ at each iteration and selects the best-performing value. 
    While not practical, this allows us to evaluate the idealized upper bound of SPI performance;
    \item \textbf{JBR-PSRO-$\delta$R}: exploration-augmented JBR with $\delta$-random exploration;
    \item \textbf{JBR-PSRO-$\delta$T}: exploration-augmented JBR with $\delta$-targeted exploration;
    \item \textbf{HBR-PSRO($k$)}: hybrid variant alternating between JBR and IBR every $k$ iterations, 
    which can also be combined with exploration (e.g., \textbf{HBR-PSRO($k$)-$\delta$T}).
\end{itemize}
For comparison with standard MARL paradigms, we also include \textbf{IL} (independent learners using DDPG~\cite{lillicrap2015continuous}) and \textbf{CTDE} (centralized training with decentralized execution using MADDPG~\cite{lowe2017multi}).

\paragraph{Implementation details.}
All PSRO variants use the same meta-strategy solver, namely \emph{projected replicator dynamics}~\cite{lanctot2017unified}, to compute the Nash equilibrium of the empirical payoff matrix at each iteration. 
In the poker domains, best responses (BRs) are computed via model-based value iteration, while in the particle environments we employ deep RL oracles implemented with DDPG or MADDPG, depending on whether training is independent or centralized. 
Each iteration adds one new policy per agent and updates the meta-strategy accordingly. Although some particle environments involve more than two individual agents—such as \emph{Simple Tag} (one runner vs. two adversaries) and \emph{Simple Adversary} (two cooperating agents vs. one adversary)—we treat them as two-player games by grouping agents into two opposing teams: \emph{agents} and \emph{adversaries}. 
This team-based formulation follows the \emph{Team PSRO} framework~\cite{mcaleer2023team}, ensuring that equilibrium analysis remains well-defined at the team level. For best-response computation, we employ DDPG when the team consists of a single agent and MADDPG when the team includes multiple cooperating agents.

Following prior work~\cite{lanctot2017unified,zhou2022efficient}, we run PSRO for $100$ iterations in the poker games, allocating a budget of $10{,}000$ episodes for computing best responses. 
In the particle environments, we run $50$ iterations with a BR budget of $50{,}000$ episodes per iteration. 
Importantly, in standard PSRO each agent receives this full budget independently, whereas in JBR all agents \emph{share} the same budget to compute their BRs jointly, reflecting the gain in training efficiency. 
Unless otherwise stated, the exploration rate $\delta$ is set to $0.1$ for \textbf{JBR-PSRO-$\delta$R} and $0.5$ for \textbf{JBR-PSRO-$\delta$T}; sensitivity analyses for these values are reported in Section~\ref{sec:ablation_delta}. 
All reported results are averaged over three random seeds.

To evaluate convergence, we track two metrics: (i) \emph{NashConv}, measuring equilibrium accuracy, and (ii) \emph{total BR episodes}, quantifying overall sample efficiency. 
In small discrete games such as Kuhn and Leduc Poker, NashConv is computed exactly using OpenSpiel’s implementation~\cite{lanctot2019openspiel}, where a full tree search is performed to obtain the true best response. 
In continuous particle environments, where exact BRs are intractable, we instead estimate an \emph{approximate NashConv}: for each agent we train three BR candidates with different random initializations and an extended BR budget of $100{,}000$ episodes, and select the best-performing one to approximate the true best response.

\begin{figure}[t]
    \centering
    \includegraphics[width=1\columnwidth]{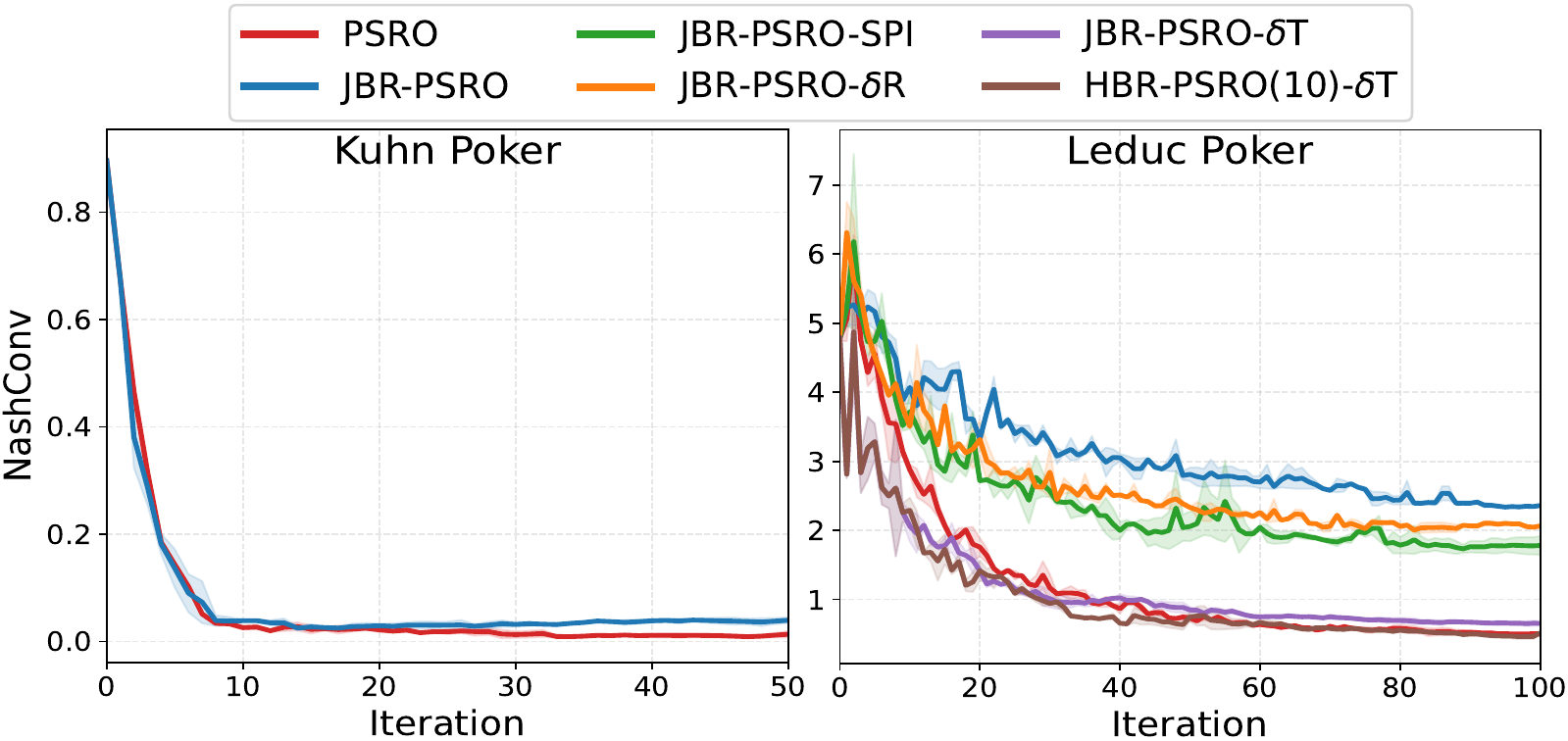}
    \caption{
    Convergence of PSRO and JBR in two poker games of increasing complexity. 
    \textbf{Left:} In \emph{Kuhn Poker}, JBR remains close to PSRO, indicating that it is feasible when the state space is small and well covered. 
    \textbf{Right:} In \emph{Leduc Poker}, naive JBR diverges from PSRO as complexity grows, revealing the offline-learning bias that motivates the JBR variants.
    }
    \Description{
    Convergence of PSRO and JBR in two poker games of increasing complexity. 
    \textbf{Left:} In \emph{Kuhn Poker}, JBR remains close to PSRO, indicating that it is feasible when the state space is small and well covered. 
    \textbf{Right:} In \emph{Leduc Poker}, naive JBR diverges from PSRO as complexity grows, revealing the offline-learning bias that motivates the JBR variants.
    }
    \label{fig:kuhn_leduc}
\end{figure}

\subsection{Is Joint Experience Best Response Feasible?}

We first investigate JBR can serve as a reliable alternative to independent best responses.
We evaluate \textbf{PSRO} and \textbf{JBR-PSRO} in \emph{Kuhn Poker}, a minimal imperfect-information game with a small and fully covered state space.

As shown in the left panel of Figure~\ref{fig:kuhn_leduc}, the performance gap between JBR-PSRO and PSRO remains small across iterations, indicating that JBR closely follows the convergence behavior of standard PSRO. 
Because Kuhn Poker’s state space is compact and every state–action pair is visited under the joint meta-strategy, there is no distribution-shift bias when computing best responses from shared data. 
This confirms that JBR is \emph{feasible} in low-complexity domains where the induced MDPs are well covered by on-policy trajectories.

\subsection{When Does Na\"ive JBR Fail?}

We next examine how JBR behaves as best-response computation becomes more complex. 
To this end, we compare \textbf{PSRO} and \textbf{JBR-PSRO} in \emph{Leduc Poker}, which features a significantly larger state space and deeper decision structure. 
Unlike Kuhn Poker, the joint dataset collected under the meta-strategy covers only a subset of reachable trajectories, leading to an offline-learning bias when best responses are computed from shared experience.

As shown in the right panel of Figure~\ref{fig:kuhn_leduc}, the performance gap between JBR-PSRO and PSRO increases considerably: while PSRO continues to reduce NashConv, JBR-PSRO stagnates at higher exploitability. 
This demonstrates that na\"ive JBR, although efficient, fails to maintain equilibrium accuracy once the induced MDPs become partially observable or high-dimensional. 
These results motivate the development of the \emph{JBR variants} discussed next.

\subsection{How Effective Are the JBR Variants?}

We now evaluate whether the enhanced \emph{JBR variants} can mitigate the offline-learning bias observed in na\"ive JBR. 
Our focus is on the performance of \textbf{JBR-PSRO-SPI}, \textbf{JBR-PSRO-$\delta$R}, and \textbf{JBR-PSRO-$\delta$T} compared to the baseline \textbf{JBR-PSRO}. 

As shown in the right panel of Figure~\ref{fig:kuhn_leduc}, all enhanced JBR variants outperform the na\"ive baseline, confirming that explicit mechanisms for coverage and stability substantially improve best-response quality. 
However, their relative effectiveness differs: 
\textbf{JBR-PSRO-SPI} is conservative, as its Safe Policy Improvement constraint forces the policy to revert to the baseline meta-strategy in under-sampled regions, limiting progress. 
\textbf{JBR-PSRO-$\delta$R}, which employs random exploration, increases data diversity but lacks direction, leading to only moderate gains. 
In contrast, \textbf{JBR-PSRO-$\delta$T} achieves near-PSRO performance by guiding exploration toward promising trajectories. 
Its $\delta$-targeted exploration explicitly conditions on the current best response during data collection—anticipating how opponents' strategies will change after the next update. 
This forward-looking behavior resembles \emph{opponent modeling}, enabling more accurate and stable best responses. 
Interestingly, this forecasting nature also explains the faster convergence observed in early iterations of \textbf{JBR-PSRO-$\delta$T} compared to standard PSRO, as agents proactively adapt to the direction of the upcoming meta-strategy updates.

Based on these observations, we identify \textbf{JBR-PSRO-$\delta$T} as the most effective variant. 
To further close the small remaining gap to PSRO, we test the hybrid approach that periodically replaces JBR updates with independent best responses. 
As shown in Figure~\ref{fig:kuhn_leduc}, this configuration, \textbf{HBR-PSRO(10)-$\delta$T}, achieves PSRO-level convergence while preserving the substantial sample-efficiency gains of JBR.

\subsection{How Sample-Efficient are the JBR variants?}

We now analyze the \emph{sample efficiency of best-response computation} across different PSRO methods. 
Figure~\ref{fig:efficiency_tradeoff} reports results in \emph{Leduc Poker}, showing the total number of environment episodes used for best-response training (in millions, $x$-axis) versus the minimum NashConv achieved after 100 iterations ($y$-axis). 
This directly quantifies how efficiently each method computes best responses relative to its equilibrium quality.

Standard \textbf{PSRO} achieves the lowest NashConv but is highly sample-inefficient, as each agent independently trains a new best response at every iteration. 
In contrast, \textbf{JBR-PSRO} and its enhanced variants reuse a single shared dataset to compute best responses for all agents simultaneously, effectively amortizing the simulation cost across agents. 
This substantially reduces the number of environment interactions required for best-response computation while maintaining competitive equilibrium accuracy.

Among the enhanced variants, \textbf{JBR-PSRO-$\delta$T} achieves the best trade-off between equilibrium accuracy and BR sample efficiency, reaching near-PSRO NashConv with half of PSRO’s total BR episodes. 
Its hybrid counterparts, \textbf{HBR-PSRO(10)-$\delta$T} and \textbf{HBR-PSRO(30)-$\delta$T}, further close the remaining gap to PSRO by occasionally performing independent BR updates, requiring only a small increase in BR episodes. 
\textbf{JBR-PSRO-SPI} remains more stable but conservative, while \textbf{JBR-PSRO-$\delta$R} provides moderate improvements due to uninformed exploration. 
Overall, these results confirm that JBR-based training substantially improves the \emph{sample efficiency of best-response computation}, enabling practical deployment of PSRO in domains where environment interaction cost dominates runtime.

\begin{figure}[t]
    \centering
    \includegraphics[width=1\linewidth]{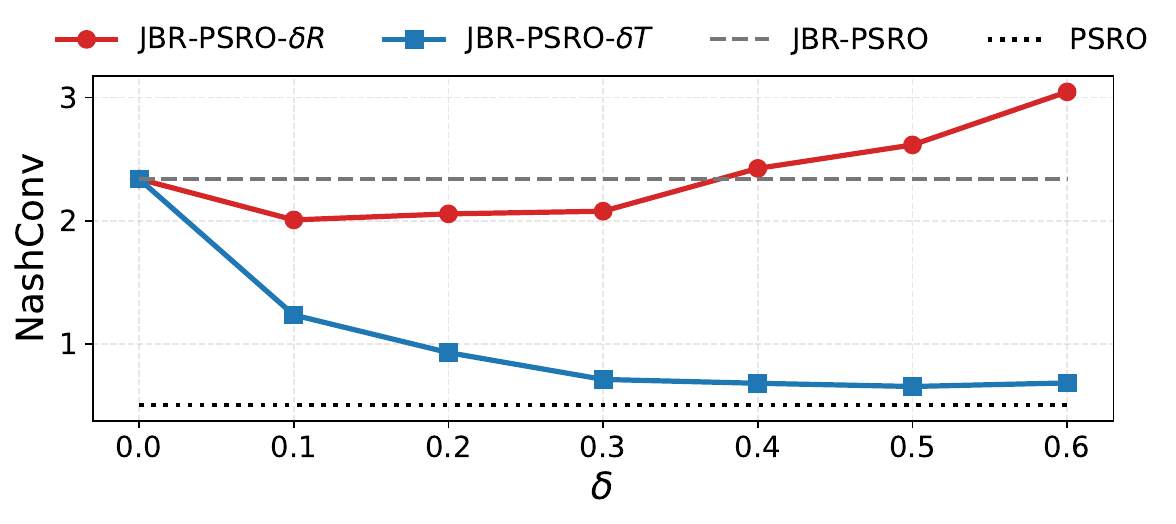}
    \caption{
    \textbf{Effect of the exploration rate $\delta$ in Leduc Poker.} 
    Minimum NashConv after 100 iterations for random 
    (\textsc{JBR-PSRO-$\delta$R}) and targeted 
    (\textsc{JBR-PSRO-$\delta$T}) exploration. 
    Random exploration peaks at $\delta{=}0.1$ then degrades beyond $0.4$, 
    while targeted exploration improves up to $\delta{=}0.5$ and remains 
    consistently better than naïve JBR.
    }
    \Description{
    \textbf{Effect of the exploration rate $\delta$ in Leduc Poker.} 
    Minimum NashConv after 100 iterations for random 
    (\textsc{JBR-PSRO-$\delta$R}) and targeted 
    (\textsc{JBR-PSRO-$\delta$T}) exploration. 
    Random exploration peaks at $\delta{=}0.1$ then degrades beyond $0.4$, 
    while targeted exploration improves up to $\delta{=}0.5$ and remains 
    consistently better than naïve JBR.
    }
    \label{fig:delta_ablation}
\end{figure}

\begin{figure}[t]
    \centering
    \includegraphics[width=1\columnwidth]{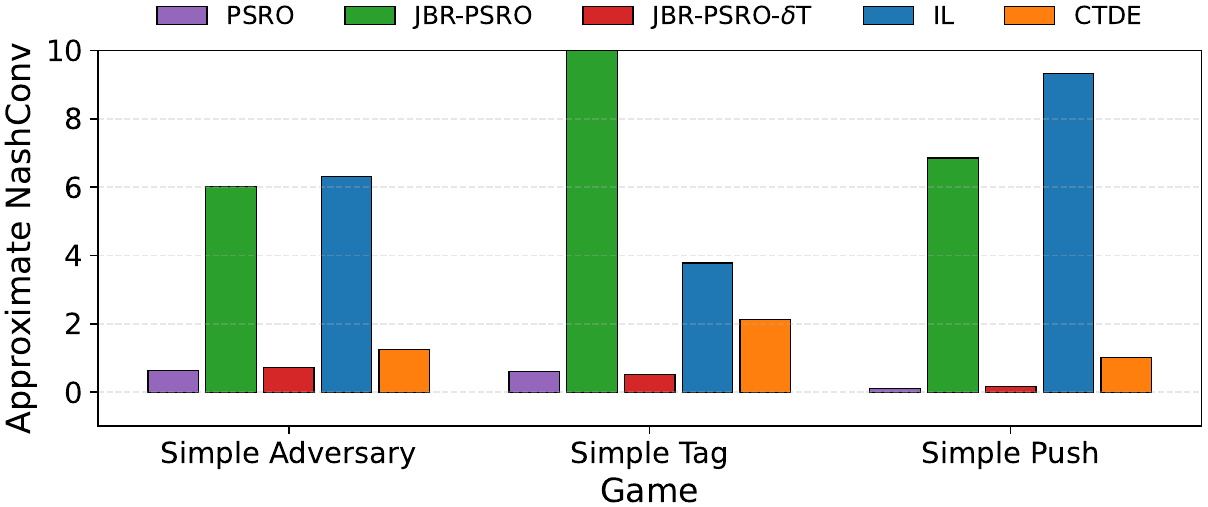}
    \caption{
    \textbf{Approximate NashConv in continuous multi-agent environments.}
    Comparison of \textsc{PSRO}, \textsc{JBR-PSRO}, \textsc{JBR-PSRO-$\delta$T}, and two MARL baselines (\textsc{IL/DDPG}, \textsc{CTDE/MADDPG}) on \emph{Simple Tag}, \emph{Simple Adversary}, and \emph{Simple Push}. 
    \textsc{PSRO} and \textsc{JBR-PSRO-$\delta$T} achieve comparable approximate NashConv across games; both PSRO methods outperform \textsc{IL} and \textsc{CTDE}.
    }
    \Description{
    \textbf{Approximate NashConv in continuous multi-agent environments.}
    Comparison of \textsc{PSRO}, \textsc{JBR-PSRO}, \textsc{JBR-PSRO-$\delta$T}, and two MARL baselines (\textsc{IL/DDPG}, \textsc{CTDE/MADDPG}) on \emph{Simple Tag}, \emph{Simple Adversary}, and \emph{Simple Push}. 
    \textsc{PSRO} and \textsc{JBR-PSRO-$\delta$T} achieve comparable approximate NashConv across games; both PSRO methods outperform \textsc{IL} and \textsc{CTDE}.
    }
    \label{fig:particles_min_nashconv}
\end{figure}

\subsection{Sensitivity to the Exploration Rate \texorpdfstring{$\boldsymbol{\delta}$}{delta}}
\label{sec:ablation_delta}

We next study how $\delta$ influences performance in \emph{Leduc Poker}. 
Figure~\ref{fig:delta_ablation} reports the final NashConv after 100 iterations for both \textbf{JBR-PSRO-$\delta$R} and \textbf{JBR-PSRO-$\delta$T} across different $\delta$ values.

For \textbf{JBR-PSRO-$\delta$R}, performance initially improves as $\delta$ increases, reaching its best value at $\delta = 0.1$. 
Beyond this point, higher levels of random mixing gradually degrade performance, and after $\delta = 0.4$, results become worse than the naïve \textbf{JBR-PSRO} baseline. 
This behavior reflects the instability introduced by excessive randomization, which distorts value estimates and weakens the best-response signal. In contrast, \textbf{JBR-PSRO-$\delta$T} shows a steady improvement as $\delta$ increases, plateauing around $\delta = 0.4$ and peaking near $\delta = 0.5$. 
A slight drop occurs at $\delta = 0.6$, but performance remains well above that of \textbf{JBR-PSRO} for all tested values. 
This robustness indicates that targeted exploration continues to enhance coverage and opponent forecasting over a broad range of $\delta$.

Overall, these results confirm that random and targeted exploration behave differently: 
random exploration requires a small, cautious perturbation to avoid destabilizing learning, whereas targeted exploration benefits from stronger mixing that accentuates the predictive structure of the meta-strategy.

\subsection{Beyond Poker: Continuous Multi-Agent Environments}

We now evaluate whether the conclusions from poker generalize to continuous multi-agent control. 
In the \emph{Simple Tag}, \emph{Simple Adversary}, and \emph{Simple Push} particle environments, exact best responses are intractable; accordingly, we report an \emph{approximate NashConv}. 
For each agent, we train three BR candidates with different random initializations under an extended BR budget of $100{,}000$ episodes and select the best-performing candidate to approximate the true BR (the same protocol is used to score IL and CTDE).

Figure~\ref{fig:particles_min_nashconv} compares \textbf{PSRO}, \textbf{JBR-PSRO}, and \textbf{JBR-PSRO-$\delta$T} (the best-performing enhanced variant), alongside two standard MARL frameworks: \textbf{IL (DDPG)} and \textbf{CTDE (MADDPG)}. 
Across all three environments, \textbf{PSRO} and \textbf{JBR-PSRO-$\delta$T} attain comparable approximate NashConv, while \textbf{JBR-PSRO} lags behind—mirroring the Leduc findings and underscoring the need for targeted exploration when data are reused. 
Moreover, both PSRO and JBR-PSRO-$\delta$T consistently outperform \textbf{IL} and \textbf{CTDE}, indicating greater robustness to strategic non-stationarity. 
These results reinforce the usefulness of PSRO-style game-theoretic training in continuous domains and highlight why improving the sample efficiency of best-response computation is practically important.

\subsection{Summary of Empirical Insights}
\label{sec:summary_empirical}

Across all experiments, our results collectively demonstrate that JBR substantially improves the \emph{sample efficiency of best-response computation} in PSRO while preserving strong equilibrium performance when paired with targeted exploration. 
In small, fully covered games (\emph{Kuhn Poker}), JBR matches PSRO’s convergence behavior, establishing feasibility under well-sampled conditions. 
In larger games (\emph{Leduc Poker}), naïve JBR underperforms due to offline bias, but enhanced variants—especially \textbf{JBR-PSRO-$\delta$T}—restore accuracy by anticipating opponent updates; \textbf{HBR-PSRO($k$)-$\delta$T} closes the remaining gap to PSRO with only a modest increase in BR episodes. 
Ablations show that random exploration requires small mixing, whereas targeted exploration benefits from stronger perturbation (optimal around $\delta{=}0.5$). 
In continuous multi-agent control, \textbf{JBR-PSRO-$\delta$T} again achieves performance comparable to PSRO while being far more sample-efficient. 
Moreover, both PSRO and JBR-PSRO-$\delta$T demonstrate greater robustness than standard MARL frameworks such as IL and CTDE, highlighting the importance of PSRO-style game-theoretic training for stable coordination—and showing that improving the efficiency of best-response computation makes such training practically viable in larger-scale domains.

\section{Discussion}
\label{sec:discussion}

\paragraph{Practical implications.}
PSRO remains one of the most reliable approaches for learning strategically robust policies, yet its use is often constrained by the high cost of best-response computation. 
The proposed JBR framework mitigates this limitation by reusing experience across agents, with \textbf{JBR-PSRO-$\delta$T} providing the best trade-off between equilibrium accuracy and BR sample efficiency. 
When near-exact PSRO accuracy is required, hybrid updates (\textbf{HBR-PSRO($k$)-$\delta$T}) offer a simple way to recover full performance with minimal additional cost.

\paragraph{Interpretation.}
Targeted exploration acts as an implicit form of opponent modeling: by conditioning data collection on the current best response, agents forecast likely opponent adaptations, explaining the faster early convergence of \textbf{JBR-PSRO-$\delta$T}. 
This connection highlights how predictive exploration can improve both efficiency and stability in meta-strategy learning.

\paragraph{Limitations.}
Despite the gains, PSRO and JBR variants still require many iterations and substantially more samples than typical MARL methods. 
Scalability to larger policy populations and high-dimensional continuous domains remains challenging, as meta-strategy updates and payoff matrix growth introduce additional computational cost. 
Developing adaptive schedules for $\delta$ and $k$, or integrating learned priors to guide exploration, are promising directions for reducing this overhead.

\section{Conclusion}
\label{sec:conclusion}

We introduced the Joint Experience Best Response (JBR) framework, which reduces the cost of best-response computation in PSRO by reusing shared interaction data across agents. 
While naïve JBR can suffer from offline-learning bias in complex games, the enhanced variant \textbf{JBR-PSRO-$\delta$T} and its hybrid extension \textbf{HBR-PSRO($k$)-$\delta$T} recover PSRO-level equilibrium accuracy with far greater sample efficiency. 
Experiments across both discrete and continuous multi-agent domains show that JBR-based training maintains the robustness of PSRO while making it more practical for large-scale applications. 
Future work will explore adaptive exploration and hybrid schedules, scalable meta-strategy updates, and extensions of targeted exploration to deep continuous settings.

\appendix
\section{Proof of Theorem~\ref{thm:exploration-guarantee}}
\label{app:exploration-proof}

We provide the full proof that PSRO with exploration-augmented JBR preserves convergence guarantees up to a bounded error. We focus on finite two-player zero-sum Markov games with bounded payoffs in $[\underline{u}, \overline{u}]$ 
and range $R=\overline{u}-\underline{u}$.

\begin{lemma}[Linearity in mixtures]\label{lem:lin}
For fixed $\pi_i$, $u_i(\pi_i,\cdot)$ is linear in the opponent mixture, and 
$u_i(\cdot,\cdot)$ is bilinear in $(\sigma_i,\sigma_{-i})$.
\end{lemma}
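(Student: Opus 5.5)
The plan is to reduce the claim to the multilinearity of the expected-utility functional in the normal-form game $\mathcal{G}=(N,(\Pi_i),(u_i))$ induced by the Markov game. Here a mixed strategy is a probability measure over the pure policies in $\Pi_i$, and $u_i$ is extended to mixed profiles by expectation.

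\textbf{Step 1 (linearity in the opponent mixture).} Fix a pure policy $\pi_i$. For a mixed opponent profile $\sigma_{-i}$ (a distribution over $\Pi_{-i}$), I would take as definition
\[
u_i(\pi_i,\sigma_{-i})=\sum_{\pi_{-i}}\sigma_{-i}(\pi_{-i})\,u_i(\pi_i,\pi_{-i}),
\]
with an integral if the support is infinite; this is well defined because payoffs are bounded in $[\underline u,\overline u]$. For two mixtures $\sigma_{-i},\sigma'_{-i}$ and $\lambda\in[0,1]$, the mixture $\lambda\sigma_{-i}+(1-\lambda)\sigma'_{-i}$ is again a probability measure. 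Linearity of the sum (or integral) in the measure then gives $u_i(\pi_i,\lambda\sigma_{-i}+(1-\lambda)\sigma'_{-i})=\lambda u_i(\pi_i,\sigma_{-i})+(1-\lambda)u_i(\pi_i,\sigma'_{-i})$.

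\textbf{Step 2 (bilinearity).} In the same way, $u_i(\sigma_i,\sigma_{-i})=\sum_{\pi_i}\sigma_i(\pi_i)\,u_i(\pi_i,\sigma_{-i})$. This uses the fact that independently sampled players induce the product measure $\sigma_i\otimes\sigma_{-i}$, together with Fubini, which applies because the integrand is bounded. Linearity in $\sigma_i$ for fixed $\sigma_{-i}$ follows by the argument of Step 1 with the roles of the players swapped. Combined with Step 1, this gives bilinearity in $(\sigma_i,\sigma_{-i})$ for the two-player case treated in the appendix, and multilinearity for $n$ players.

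\textbf{Main obstacle.} The delicate point is that the perturbation $\tilde\sigma$ is defined per state, as a behavior-strategy mixture $(1-\delta)\sigma_j(\cdot\mid s)+\delta\nu_j(\cdot\mid s)$, rather than as a mixture over whole policies. Linearity holds for mixtures over policies but fails for per-state mixtures of behavior policies, since the utility is multilinear across information states. To make the lemma usable, I would invoke Kuhn's theorem under perfect recall, as in \S\ref{sec:induced-mdp}, to pass between behavior and mixed representations. I would then state and use linearity only in the mixed (normal-form) representation, where it holds exactly. I would note that for the theorem, closeness of $\tilde\sigma$ to $\sigma$ would then be argued via a coupling: with probability at least $(1-\delta)^{T}$ per trajectory, or $1-\delta$ if perturbation is sampled per episode, play coincides with $\sigma$. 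That coupling argument belongs to the theorem rather than to this lemma.
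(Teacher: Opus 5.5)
Your argument is correct and is essentially the paper's proof: the paper writes $u_i(\sigma_i,\sigma_{-i})=\sum_{\pi_i,\pi_{-i}}\sigma_i(\pi_i)\sigma_{-i}(\pi_{-i})u_i(\pi_i,\pi_{-i})$ and reads off linearity in each factor, and your product-measure and Fubini discussion only makes this explicit when the supports are infinite. Your side remark is also correct: this linearity concerns mixtures over whole policies, so it does not by itself control the per-state $\delta$-mixing used later, which is exactly where the paper's step from $\|\tilde\sigma_j(\cdot\mid s)-\sigma_j(\cdot\mid s)\|_1\le 2\delta$ to $\Delta\le 2\delta$ needs more than this lemma provides.
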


\begin{proof}
Immediate from

\(
u_i(\sigma_i,\sigma_{-i})
=\sum_{\pi_i,\pi_{-i}}\sigma_i(\pi_i)\sigma_{-i}(\pi_{-i})u_i(\pi_i,\pi_{-i}).
\)
\end{proof}

\begin{lemma}[Stability under perturbations]\label{lem:stability}
For any $\pi_i$ and mixtures $\sigma_{-i},\tilde\sigma_{-i}$,
\[
\big|u_i(\pi_i,\sigma_{-i})-u_i(\pi_i,\tilde\sigma_{-i})\big|
\;\le\;\tfrac{R}{2}\,\|\sigma_{-i}-\tilde\sigma_{-i}\|_1.
\]
\end{lemma}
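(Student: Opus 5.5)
The plan is to reduce the claim to the standard bound for a difference of expectations of a bounded function under two probability measures. By Lemma~\ref{lem:lin}, for fixed $\pi_i$ we can write
\[
u_i(\pi_i,\sigma_{-i})=\sum_{\pi_{-i}}\sigma_{-i}(\pi_{-i})\,f(\pi_{-i}),
\]
where $f(\pi_{-i}):=u_i(\pi_i,\pi_{-i})$. The analogous expression holds with $\tilde\sigma_{-i}$. Subtracting the two gives
\[
u_i(\pi_i,\sigma_{-i})-u_i(\pi_i,\tilde\sigma_{-i})=\sum_{\pi_{-i}} d(\pi_{-i})\,f(\pi_{-i}),
\qquad d:=\sigma_{-i}-\tilde\sigma_{-i}.
\]
The function $f$ takes values in $[\underline{u},\overline{u}]$ because payoffs are bounded.

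The key observation is that $\sum_{\pi_{-i}} d(\pi_{-i})=0$, since both mixtures are probability distributions. Hence any constant can be subtracted from $f$ without changing the sum. We choose the midpoint $c=(\underline{u}+\overline{u})/2$, which gives $|f-c|\le R/2$ pointwise. Then
\[
\Big|\sum d\,(f-c)\Big|\le \sum |d|\,|f-c|\le \tfrac{R}{2}\|d\|_1 ,
\]
which is the claimed inequality. Equivalently, one can split $d$ into its positive and negative parts. Each part has mass $\tfrac12\|d\|_1$, so the difference is at most $\tfrac12\|d\|_1(\overline{u}-\underline{u})$.

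The only step needing care is the meaning of ``mixture'' when $\sigma_{-i}$ is a behavior strategy rather than a distribution over a finite pure-policy set, as in \eqref{eq:eps-meta-single}. My first approach is to work in the induced normal-form game, where $\sigma_{-i}$ and $\tilde\sigma_{-i}$ are distributions over $\Pi_{-i}$. By Kuhn's theorem (perfect recall), the behavior-strategy perturbation is realization-equivalent to such a mixture, so the argument above applies verbatim; in finite games the sums are finite. Alternatively, if the lemma is intended directly for distributions over opponent trajectories, the same centering argument applies with $f$ taken to be the return of a trajectory. That version requires the return to lie in $[\underline{u},\overline{u}]$, which the boundedness assumption provides.

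For the later use in Theorem~\ref{thm:exploration-guarantee}, I would note that $\tilde\sigma_{-i}-\sigma_{-i}=\delta(\nu_{-i}-\sigma_{-i})$ at the level of mixtures. This gives $\|\sigma_{-i}-\tilde\sigma_{-i}\|_1\le 2\delta$, so each utility shifts by at most $R\delta$. That consequence is not part of the present lemma.
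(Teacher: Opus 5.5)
Your proof is correct and follows the paper's route: the paper simply cites the duality between total variation and bounded functions, and your midpoint-centering step (using $\sum_{\pi_{-i}} d(\pi_{-i})=0$ and $|f-c|\le R/2$) is exactly that duality written out. One caution on your closing aside, which lies outside this lemma: the identity $\tilde\sigma_{-i}-\sigma_{-i}=\delta(\nu_{-i}-\sigma_{-i})$ at the level of mixtures holds only for once-per-episode mixing, whereas under the per-state mixing of \eqref{eq:eps-meta-single} the perturbation compounds across the opponent's decisions (with $H\ge 2$ such decisions the trajectory-level $\ell_1$ distance can reach $2(1-(1-\delta)^H)>2\delta$), so $\|\sigma_{-i}-\tilde\sigma_{-i}\|_1\le 2\delta$ does not follow from the per-state bound---a leap the paper's appendix also makes.
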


\begin{proof}
By Lemma~\ref{lem:lin}, $u_i(\pi_i,\cdot)$ is an expectation of a bounded function.  
The difference of expectations is bounded by $(R/2)\|\cdot\|_1$ (duality of total variation and bounded functions).
\end{proof}

\begin{lemma}[BR to perturbed $\Rightarrow$ approx-BR to true]\label{lem:abr}
Let $\pi_i^\star\in\bri(\sigma_{-i})$ and $\hat\pi_i$ be an $\varepsilon$-BR to $\tilde\sigma_{-i}$. Then
\[
u_i(\pi_i^\star,\sigma_{-i}) - u_i(\hat\pi_i,\sigma_{-i})
\;\le\;\varepsilon + R\,\|\sigma_{-i}-\tilde\sigma_{-i}\|_1.
\]
\end{lemma}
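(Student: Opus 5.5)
The argument is a three-term telescoping decomposition that passes through the perturbed opponent mixture $\tilde\sigma_{-i}$, followed by two applications of Lemma~\ref{lem:stability} and one use of the $\varepsilon$-BR property. Write
\begin{align*}
u_i(\pi_i^\star,\sigma_{-i}) - u_i(\hat\pi_i,\sigma_{-i})
&= \big[u_i(\pi_i^\star,\sigma_{-i}) - u_i(\pi_i^\star,\tilde\sigma_{-i})\big]
+ \big[u_i(\pi_i^\star,\tilde\sigma_{-i}) - u_i(\hat\pi_i,\tilde\sigma_{-i})\big]\\
&\quad + \big[u_i(\hat\pi_i,\tilde\sigma_{-i}) - u_i(\hat\pi_i,\sigma_{-i})\big].
\end{align*}
I will bound each bracket separately.

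For the first and third brackets, apply Lemma~\ref{lem:stability}, with $\pi_i=\pi_i^\star$ and $\pi_i=\hat\pi_i$ respectively. Each is then at most $\tfrac{R}{2}\|\sigma_{-i}-\tilde\sigma_{-i}\|_1$ in absolute value, so together they contribute at most $R\|\sigma_{-i}-\tilde\sigma_{-i}\|_1$.

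For the middle bracket, use that $\hat\pi_i$ is an $\varepsilon$-BR to $\tilde\sigma_{-i}$. This means
\[
u_i(\hat\pi_i,\tilde\sigma_{-i}) \ge \max_{\pi_i} u_i(\pi_i,\tilde\sigma_{-i}) - \varepsilon \ge u_i(\pi_i^\star,\tilde\sigma_{-i}) - \varepsilon,
\]
so the middle bracket is at most $\varepsilon$. Note that $\pi_i^\star$ need not be optimal against $\tilde\sigma_{-i}$; we only use that it is a feasible competitor in the maximum. Summing the three bounds gives $\varepsilon + R\|\sigma_{-i}-\tilde\sigma_{-i}\|_1$, which is the claim.

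No step is a real obstacle, since everything reduces to Lemma~\ref{lem:stability}. The one point needing care is that $u_i(\pi_i,\cdot)$ must be evaluated on the same object ($\sigma_{-i}$ versus $\tilde\sigma_{-i}$) in whatever representation Lemma~\ref{lem:stability} was stated for. This can be either mixtures over policies or the behavior-strategy representation from Section~\ref{sec:induced-mdp}. If $\tilde\sigma_{-i}$ is defined per state as in \eqref{eq:eps-meta-single}, then the relevant $\|\cdot\|_1$ should be read as the distance between the induced trajectory (or outcome) distributions. To apply Lemma~\ref{lem:stability} verbatim, I would make this identification explicit, via realization-equivalence under perfect recall (Kuhn's theorem).
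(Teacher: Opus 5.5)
Your proof is correct and matches the paper's argument: the same add-and-subtract through $\tilde\sigma_{-i}$, Lemma~\ref{lem:stability} applied to each outer term for a total of $R\|\sigma_{-i}-\tilde\sigma_{-i}\|_1$, and the $\varepsilon$-BR property bounding the middle term by $\varepsilon$. Your closing caveat, that $\|\cdot\|_1$ must be read at the level of induced outcome distributions when $\tilde\sigma_{-i}$ is a per-state mixture, is a point the paper leaves implicit.
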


\begin{proof}
Adding and subtracting expectations under $\tilde\sigma_{-i}$ gives
\begin{multline*}
u_i(\pi_i^\star,\sigma_{-i}) - u_i(\hat\pi_i,\sigma_{-i}) \\
=\big[u_i(\pi_i^\star,\sigma_{-i}) - u_i(\pi_i^\star,\tilde\sigma_{-i})\big]
+ \big[u_i(\pi_i^\star,\tilde\sigma_{-i}) - u_i(\hat\pi_i,\tilde\sigma_{-i})\big] \\
+ \big[u_i(\hat\pi_i,\tilde\sigma_{-i}) - u_i(\hat\pi_i,\sigma_{-i})\big].
\end{multline*}
The middle term is $\le\varepsilon$ by optimality of $\hat\pi_i$ under $\tilde\sigma_{-i}$.  
The other two terms are each $\le(R/2)\|\sigma_{-i}-\tilde\sigma_{-i}\|_1$ by Lemma~\ref{lem:stability}. 
\end{proof}

\begin{theorem}[PSRO with perturbed targets]\label{thm:perturbed}
Suppose at termination each oracle computes an $\varepsilon$-BR to $\tilde\sigma_{-i}$ with 
$\|\tilde\sigma_{-i}-\sigma_{-i}\|_1 \le \Delta$.  
If the termination check is performed under $\sigma$, then $\sigma$ is an 
$(\varepsilon + 2R\Delta)$-Nash equilibrium.
\end{theorem}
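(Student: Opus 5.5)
The plan is to reduce the statement to Lemma~\ref{lem:abr}, applied player by player, and then interpret the PSRO termination condition. The key question is what ``termination'' gives us. I will read it in the standard PSRO way: the algorithm stops when the newly proposed response $\hat\pi_i$ does not improve on the current meta-strategy against $\sigma_{-i}$. Concretely, this means $u_i(\hat\pi_i,\sigma_{-i}) \le u_i(\sigma_i,\sigma_{-i})$ for each player $i$. This is the sense in which the termination check is ``performed under $\sigma$.'' Equivalently, one can say that $\hat\pi_i$ is already in the support of the restricted game and $\sigma$ is an equilibrium of the restricted game. Either way, we get the inequality above, since $\sigma$ is a restricted-game equilibrium and $\hat\pi_i \in X_i$.

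With that in hand, fix a player $i$ and let $\pi_i^\star \in \bri(\sigma_{-i})$. The regret decomposes as
\[
\rho_i^{\mathcal{G}}(\sigma) = \big[u_i(\pi_i^\star,\sigma_{-i}) - u_i(\hat\pi_i,\sigma_{-i})\big] + \big[u_i(\hat\pi_i,\sigma_{-i}) - u_i(\sigma_i,\sigma_{-i})\big].
\]
The second bracket is $\le 0$ by the termination condition. The first bracket is at most $\varepsilon + R\|\sigma_{-i}-\tilde\sigma_{-i}\|_1 \le \varepsilon + R\Delta$ by Lemma~\ref{lem:abr}. Hence each player's regret is at most $\varepsilon + R\Delta$, which is within the claimed $\varepsilon + 2R\Delta$. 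If ``$(\varepsilon+2R\Delta)$-Nash'' is instead read as a bound on the NashConv sum over the two players, the bound must be the sum. In that case the per-player bound, summed, gives $2\varepsilon + 2R\Delta$, and I would need to state which reading is intended. I would present the per-player (standard $\epsilon$-Nash) reading, which the bound covers with slack. I would also note that in the zero-sum case exploitability coincides with this sum up to normalization.

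Finally, I will connect to Theorem~\ref{thm:exploration-guarantee}. For the $\delta$-perturbed profile, $\tilde\sigma_{-i} - \sigma_{-i} = \delta(\nu_{-i} - \sigma_{-i})$ in the two-player case, so $\|\tilde\sigma_{-i}-\sigma_{-i}\|_1 \le 2\delta$. That gives $\Delta = 2\delta$ and a per-player bound of $\varepsilon + 2R\delta$.

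The main obstacle is the termination step: making precise why no profitable deviation is missed. The algorithm only checks the specific $\hat\pi_i$ against $\sigma$, not the best response to $\sigma$. Lemma~\ref{lem:abr} is exactly what bridges this gap.

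A secondary subtlety is the mismatch between the two representations of the mixture. Lemma~\ref{lem:stability} is stated for mixtures over policies, but $\tilde\sigma$ is a per-state behavioral mixture. To handle this, I would observe that $\tilde\sigma_{-i}$ is realized by the mixed strategy that, with probability $1-\delta$, plays $\sigma_{-i}$ and otherwise plays $\nu_{-i}$. This holds exactly when the mixing is done per episode. Per-state mixing is only realization-equivalent to this in the single-opponent case after using Kuhn's theorem. If that fails, I would fall back to a per-step bound, which would be cruder and depend on the horizon.
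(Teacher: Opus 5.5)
Your proof is correct and is essentially the paper's: you split the regret at $\hat\pi_i$, bound the first piece by Lemma~\ref{lem:abr} and the second by the termination inequality $u_i(\hat\pi_i,\sigma_{-i})\le u_i(\sigma_i,\sigma_{-i})$, and obtain the per-player bound $\varepsilon+R\Delta$, which is exactly what the paper derives (and what its corollary implicitly uses when it plugs in $\Delta\le 2\delta$ to get $\varepsilon+2R\delta$). One correction to your side remark, which does not affect this theorem: per-state $\delta$-mixing is generally \emph{not} realization-equivalent to per-episode mixing, even against a single opponent, because the behavior strategy that Kuhn's theorem gives for the per-episode mixture reweights $\sigma$ and $\nu$ at each state by $(1-\delta)$ and $\delta$ times their respective reach probabilities, so your horizon-dependent fallback is genuinely needed to turn the per-state bound into a mixture-level $\Delta$ (the paper's step $\Delta\le 2\delta$ glosses over the same point).
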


\begin{proof}
Fix player $i$.  
Let $\pi_i^\star\in\bri(\sigma_{-i})$ and $\hat\pi_i$ the oracle’s $\varepsilon$-BR to $\tilde\sigma_{-i}$.  
By Lemma~\ref{lem:abr},
\[
u_i(\pi_i^\star,\sigma_{-i}) - u_i(\hat\pi_i,\sigma_{-i})
\;\le\;\varepsilon + R\Delta.
\]
By the termination condition,
$u_i(\sigma_i,\sigma_{-i}) \ge u_i(\hat\pi_i,\sigma_{-i})$.  
Thus
\[
u_i(\sigma_i,\sigma_{-i})
\;\ge\;\max_{\pi_i}u_i(\pi_i,\sigma_{-i}) - (\varepsilon+R\Delta),
\]
which is the required equilibrium condition. Applying the same argument to the other player yields the claim.
\end{proof}

During data collection we use per-state $\delta$-mixing:
\[
\tilde\sigma_j(a_j\!\mid\! s)=(1-\delta)\sigma_j(a_j\!\mid\! s)+\delta\nu_j(a_j\!\mid\! s),
\]
which implies
\(\|\tilde\sigma_j(\cdot\!\mid\! s)-\sigma_j(\cdot\!\mid\! s)\|_1 \le 2\delta\).
Hence the induced deviation parameter satisfies $\Delta\le 2\delta$.

\begin{corollary}[Exploration-augmented JBR]\label{cor:ea-jbr}
If each oracle computes an $\varepsilon$-BR to the $\delta$-mixed behavior, 
then the terminal meta-strategy of PSRO is an $(\varepsilon+2R\delta)$-Nash equilibrium.
\end{corollary}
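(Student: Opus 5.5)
The corollary follows by combining Theorem~\ref{thm:perturbed} with the mixing bound stated just before the corollary. The one real step is turning the per-state $L_1$ bound into a bound on $\Delta$. $\Delta$ should be read as the parameter controlling the payoff deviation in Lemma~\ref{lem:stability}, that is, as a bound on the total variation between the trajectory distributions that player $i$ faces under $\sigma_{-i}$ and under $\tilde\sigma_{-i}$. A pure per-state bound does not give this directly.

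\textbf{Step 1 (fix the oracle's target).} At termination, every oracle returns $\hat\pi_i$, an $\varepsilon$-BR to $\tilde\sigma_{-i}$. In the two-player setting, $\tilde\sigma_{-i}$ is the single opponent's mixed behavior $(1-\delta)\sigma_j+\delta\nu_j$. Here $\nu_j$ is either uniform or the current BR candidate; the argument does not depend on which.

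\textbf{Step 2 (bound $\Delta$).} The cleanest route avoids the per-state bound and works at the level of mixtures. Write $\tilde\sigma_{-i}=(1-\delta)\sigma_{-i}+\delta\nu_{-i}$ as a mixture over whole policies: at the start of each episode, with probability $1-\delta$ the opponent plays $\sigma_{-i}$ and otherwise plays $\nu_{-i}$. By Lemma~\ref{lem:lin}, payoffs are linear in this mixture, so
\[
u_i(\pi_i,\tilde\sigma_{-i})-u_i(\pi_i,\sigma_{-i})=\delta\big(u_i(\pi_i,\nu_{-i})-u_i(\pi_i,\sigma_{-i})\big),
\]
whose absolute value is at most $\delta R$. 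This is the role $\tfrac{R}{2}\Delta$ plays in Lemma~\ref{lem:stability}, with $\Delta=2\delta$, matching the paper's stated $\Delta\le 2\delta$.

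The main obstacle is that the paper's mixing is per state rather than per episode. Per-state mixing does not literally equal an episode-level mixture, and the compounded divergence over a horizon could exceed $2\delta$. I would handle this in one of two ways. The first is to invoke Kuhn's theorem, as the paper does in the induced-MDP section, and read $\tilde\sigma$ as the realization-equivalent behavior form of an episode-level mixture. The second, if that is not accepted, is to bound the trajectory total variation by the sum of per-step deviations, which yields a horizon-dependent $\Delta$. I would adopt the first reading to match the stated constant.

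\textbf{Step 3 (conclude).} Substitute $\Delta\le 2\delta$ into Theorem~\ref{thm:perturbed}. Track the constants through Lemma~\ref{lem:abr}: the two outer terms each contribute $\tfrac{R}{2}\Delta=R\delta$, giving a per-player bound of $\varepsilon+2R\delta$. Applying this to each player gives the $(\varepsilon+2R\delta)$-Nash guarantee of Corollary~\ref{cor:ea-jbr}.

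There is a constant mismatch to note. Theorem~\ref{thm:perturbed}'s displayed bound, read literally with $\Delta=2\delta$, would give $\varepsilon+4R\delta$. Its proof, however, only establishes $\varepsilon+R\Delta$, which equals $\varepsilon+2R\delta$. I would cite the proof's bound, not the statement's.
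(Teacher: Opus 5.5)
Your skeleton is the paper's: substitute $\Delta\le 2\delta$ into the bound $\varepsilon+R\Delta$ that the proof of Theorem~\ref{thm:perturbed} actually establishes. You are right that its displayed $\varepsilon+2R\Delta$ would only give $\varepsilon+4R\delta$. You also correctly see that a per-state $L_1$ bound is not what Lemma~\ref{lem:stability} needs; the paper's proof makes that leap without comment.

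The gap is in your repair. Kuhn's theorem does not let you read the per-state mixture $(1-\delta)\sigma_j(\cdot\mid s)+\delta\nu_j(\cdot\mid s)$ as the behavior form of the episode-level mixture $(1-\delta)\sigma_j+\delta\nu_j$. The realization-equivalent behavior strategy of the latter is
\[
\beta_j(a\mid s)=\frac{(1-\delta)\,x_{\sigma_j}(s)\,\sigma_j(a\mid s)+\delta\,x_{\nu_j}(s)\,\nu_j(a\mid s)}{(1-\delta)\,x_{\sigma_j}(s)+\delta\,x_{\nu_j}(s)},
\]
where $x_\pi(s)$ is the probability that $j$'s own actions under $\pi$ lead to $s$. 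This coincides with the per-state mixture only where $x_{\sigma_j}(s)=x_{\nu_j}(s)$, for example when $j$ acts at most once per episode. So your Step~2 proves the corollary for a different data-collection scheme than the ``$\delta$-mixed behavior'' in the statement.

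The gap cannot be closed, because the stated constant fails for per-state mixing. Let player~1 first choose \emph{safe} (payoff $0$) or \emph{bet}. Player~2 then makes $H$ binary choices $L/R$. \emph{Bet} pays player~1 $+1$ if all $H$ choices are $L$ and $-1$ otherwise (zero-sum, $R=2$). Take $\sigma=(\text{safe},\ \text{always }L)$ with uniform $\nu_2$. Per-state mixing gives $u_1(\text{bet},\tilde\sigma_2)=2(1-\delta/2)^H-1$, which is negative once $(1-\delta/2)^H<\tfrac12$ (e.g.\ $\delta=0.1$, $H=14$). So the exact BR of player~1 to $\tilde\sigma_2$ is \emph{safe}, and player~2 is indifferent against \emph{safe}. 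The termination check $u_i(\sigma_i,\sigma_{-i})\ge u_i(\hat\pi_i,\sigma_{-i})$ therefore holds for both players with $\varepsilon=0$. It is even a genuine PSRO fixed point if always-$R$ is also in $X_2$ and the MSS returns the pure equilibrium $\sigma_2=$ always $L$ of the all-zero restricted game. Yet player~1's regret is $1>2R\delta=0.4$. Here $x_{\sigma_2}=1$ but $x_{\nu_2}=2^{-(t-1)}$ at step $t$, which is exactly where the Kuhn identification breaks.

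The two sound options are the ones you listed, presented as such rather than as a reading of the statement:
\begin{itemize}
\item Change the algorithm to mix once per episode. Then your linearity computation is a complete proof, and cleaner than the paper's total-variation route.
\item Keep per-state mixing and accept a horizon-dependent constant. Coupling the per-step exploration coins bounds the trajectory total variation by $1-(1-\delta)^H$. This gives regret at most $\varepsilon+2R\bigl(1-(1-\delta)^H\bigr)\le\varepsilon+2RH\delta$, where $H$ bounds the opponent's decisions per episode.
\end{itemize}
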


This corollary establishes Theorem~\ref{thm:exploration-guarantee} from the main text.
\paragraph{Remark.}
The assumption that each agent can compute an $\varepsilon$-best response to 
$\tilde\sigma_{-i}$ should be understood in the same oracle sense as in standard PSRO: 
it abstracts away the mechanism used to obtain the best response. 
In practice, responses are trained on data collected under $\tilde\sigma$. 
Without perturbation ($\delta=0$), this reduces to an offline RL problem with 
potential support mismatch. With $\delta>0$, exploration ensures that all 
actions receive nonzero probability, so the dataset has broader coverage and 
the offline problem is better posed. Thus perturbation does not make the setting 
equivalent to online learning, but it guarantees that the $\varepsilon$-BR target 
is well defined and approximable from data. Related perspectives on enforcing equilibrium conditions from fixed multi-agent datasets—without online interaction—have also been explored in recent work on inverse and simulacral multi-agent learning~\cite{goktasefficient}.



\bibliographystyle{ACM-Reference-Format} 
\bibliography{sample}

\end{document}